\begin{document}
\title{UAV Placement for Real-time Video Acquisition: A Tradeoff between Resolution and Delay}
\author{Xiao-Wei~Tang, \emph{Member, IEEE}, and Xin-Lin Huang, \emph{Senior Member, IEEE}
}
\maketitle
\begin{abstract}
Recently, UAVs endowed with high mobility, low cost, and remote control have promoted the development of UAV-assisted real-time video/image acquisition applications, which have a high demand for both transmission rate and image resolution. However, in conventional vertical photography model, the UAV should fly to the top of ground targets (GTs) to capture images, thus enlarge the transmission delay. In this paper, we propose an oblique photography model, which allows the UAV to capture images of GTs from a far distance while still satisfying the predetermined resolution requirement. Based on the proposed oblique photography model, we further study the UAV placement problem in the cellular-connected UAV-assisted image acquisition system, which aims at minimizing the data transmission delay under the condition of satisfying the predetermined image resolution requirement. Firstly, the proposed scheme is first formulated as an intractable non-convex optimization problem. Then, the original problem is simplified to obtain a tractable suboptimal solution with the help of the block coordinate descent and the successive convex approximation techniques. Finally, the numerical results are presented to show the effectiveness of the proposed scheme. The numerical results have shown that the proposed scheme can largely save the transmission time as compared to the conventional vertical photography model. 

\end{abstract}
\begin{IEEEkeywords}
Unmanned aerial vehicles, UAV placement, image acquisition, and oblique photography model.
\end{IEEEkeywords}

\vspace{0.2in}
\section{Introduction}
The great progress of aviation, energy and artificial intelligence (AI) technology promotes the rapid development of unmanned aerial vehicles (UAVs), empowering them many advantages including low cost, controllable mobility, and line-of-sight (LoS) link with ground users \cite{R1}. At the same time, UAVs are endowed with the ability of real-time ultra high-definition image transmission in the fifth generation (5G) of mobile network, thus giving birth to many UAV-assisted image/video acquisition applications such as live broadcast, disaster monitoring, agriculture precision, and virtual reality/actual reality (VR/AR) \cite{Tang1}. 

Different from conventional UAV-assisted applications, e.g., remote sensing, the emerging UAV-assisted image/video acquisition applications need to transmit the captured image/video data back to the base station (BS) in real time, which are thus facing severe challenges. Firstly, these applications have an extremely large demand on bandwidth due to the huge amount of video/image amount \cite{Tang4}. Then, in general, these applications usually bear rigorous quality of experience (QoE) requirements since users expect to receive videos with low frame loss rate, tolerable end-to-end delay and little jitter \cite{Tang3}. Last but not least, the performance of these applications is still restricted by UAV's communication range and flight endurance owing to the limited power supply \cite{You1}. 

The challenges mentioned above are becoming irreconcilable when adopting conventional  vertical photography technique to capture images/videos. To be specific, in conventional vertical photography model, the UAV can only capture images at the top of the ground target to make sure it is located in the center of the image/video so that users can easily focus on it \cite{Sai}. Although this can provide high-definition images for users, the transmission rate will be very low when the GT is far away from the BS. Even worse, once the distance between the GT and the BS is greater than a certain threshold, the communication between the UAV and the BS will be interrupted due to the limited transmit power of the UAV, which seriously degrading user's QoE. Fortunately, oblique photography makes up for the limitation that images can only be photographed from vertical angles in the past \cite{He}. Specifically, the UAV doesn't need to fly to the point above the GT, but can choose a location between the BS and the GT when capturing images/videos, thus providing a flexible tradeoff between image resolution and communication quality.

A handful of research work on the oblique photography has been done in recent years \cite{Ma, Dai, Hohle, Zhou, Aghaei, Lin, Zhang}. Specifically, the concept of the oblique photography first appeared in aerial survey for visualization purposes. By carrying multiple sensors on the UAV and collecting images from 5 different angles including 1 vertical and 4 oblique angles at the same time, the real and intuitive image effect that conforms to human vision can be generated via a series of data processing methods such as multi-vision image joint adjustment and dense matching of multi-vision images \cite{Ma, Dai}. In addition to aerial survey, object measurement is also one of the main uses of obilque photography. H\"ohle et al. proposed to measuring the distances, coordinates, elevations or areas of objects via oblique images \cite{Hohle}. Zhou et al. acquired images of plantations with different ages via UAV oblique photography and then extracted tree heights according to reconstructed three dimension (3D) point clouds \cite{Zhou}. Aghaei et al. employed the UAV to fly over a test laboratory in order to capture images at different altitudes, in order to investigate the correlation between aerial image capture altitude and potential defect identification on photovoltaic modules \cite{Aghaei}. Lin et al. adopted an electric fixed-wing UAV loaded with a digital camera to take oblique photographs of a sparse subalpine coniferous forest in the source region, aiming at extracting individual tree heights with the help of generated point cloud data obtained from the overlapping photographs \cite{Lin}. Zhang et al. proposed a UAV-based panoramic oblique photogrammetry (POP) approach to achieve georeferenced panoramic images and real 3D models together by using panorama image projection algorithms \cite{Zhang}.
\begin{figure}[htbp!]
\centering
\includegraphics[width=0.48\textwidth]{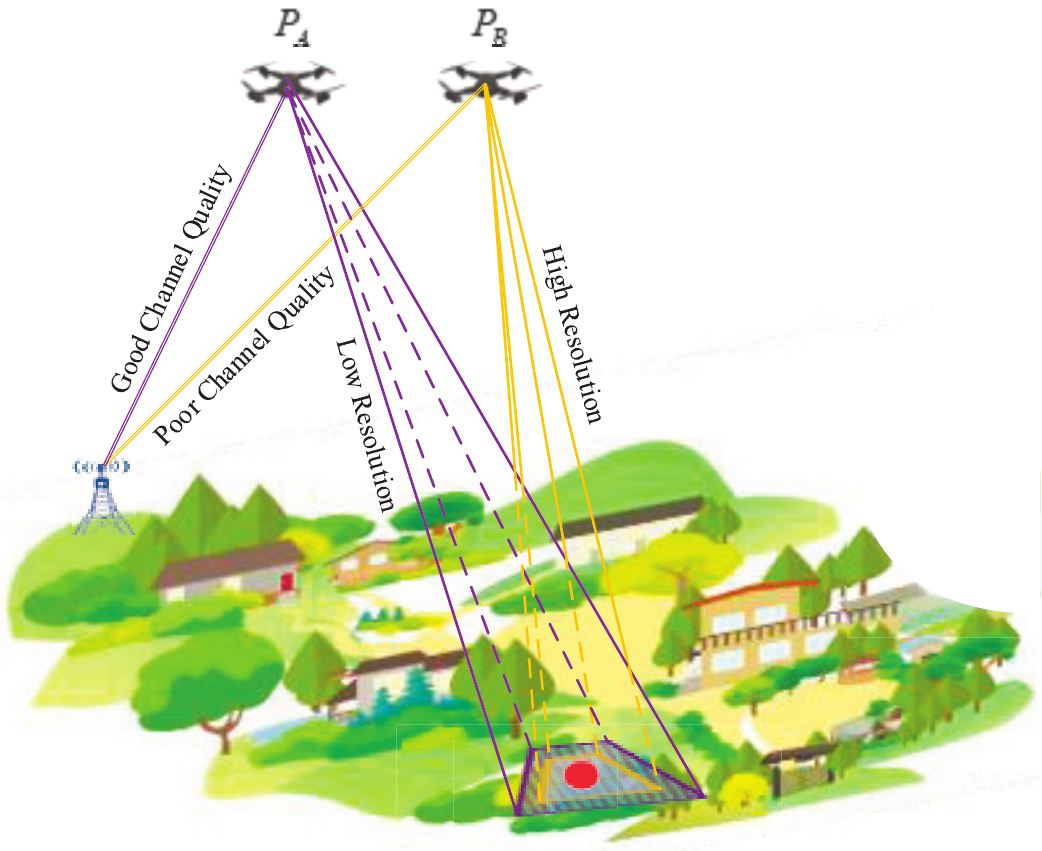}
\caption{The tradeoff between the image resolution and channel quality in the UAV-assisted image acquisition system.}
\label{F1}
\end{figure}

Although the above-mentioned literatures have made a good progress in aerial survey and object measurement, the oblique photography model they adopted is not suitable for UAV-assisted image acquisition. On one hand, the oblique photography model they adopted still can't make sure that the GT is imaged at the center of the photograph all the time which is not convenient for users to watch. On the other hand, the resolution metric is only slightly modified according to the traditional vertical photography model (i.e., adding the effect of the oblique angle) \cite{Mavrinac}, which lacks a more comprehensive formula related to the UAV's 3D coordinates. To address these issues, we propose a novel oblique photography model in this paper which quantitatively analyzes the influence of UAV's space position on the image quality. Based on the proposed oblique photography model, we further consider a UAV-assisted image acquisition and transmission system where the UAV is adopted to capture images of a GT via the carried camera and then transmit the captured image to the BS through the wireless backhaul. Fig. \ref{F1} shows a common scenario which is often encountered in practice: 1) at point $P_A$, the captured image has a low resolution, but the channel quality is good, and 2) at point $P_B$, the captured image has a high resolution, but the channel quality is poor. How to choose between these two points may make people confused. Therefore, the goal of this paper is to study the UAV deployment problem where the transmission delay can be minimized while satisfying the predetermined resolution requirement. \\
\textbf{Contributions}: The main contributions of this paper are three-fold:
\begin{enumerate}
\item A UAV-assisted oblique photography model is created where the resolution of the captured image is determined by the 3D coordinate of the UAV as well as the scale of the GT, which thus provides several new potential research directions. 
\item The UAV deployment problem is modeled as a non-convex optimization problem, aiming at minimizing the data transmission delay while ensuring that the predetermined resolution requirement can be satisfied. We firstly simplify the three-variable original problem into a two-variable one and then propose a suboptimal solution to the simplified problem by using the block coordinate descent (BCD) and successive convex approximation (SCA) techniques. 
\item Detailed numerical results are provided to verify the effectiveness of the proposed system. Firstly, the effects of horizontal and vertical coordinates on the resolution of UAV are analyzed. In addition, the performance of the UAV deployment problem are compared under three different solutions including 1) conventional scheme, 2) the proposed scheme solved by ES, and 3) the proposed scheme solved by BCD and SCA.  
\end{enumerate}

The reminder of the paper is organized as follows. Section II describes models of UAV-BS channel, oblique photography, and image transmission. In Section III, the original optimization problem is stated and simplified. In Section IV, an effective BCD and SCA-based algorithm is proposed to solve the simplified non-convex problem. In Section V, the numerical results are provided to show the effectiveness of the proposed system. In Section VI, we conclude this paper. In Section VII, we provide some potential research directions.

\vspace{0.2in}
\section{System Model}
Consider an UAV-assisted image acquisition system where a rotary-wing UAV is deployed to capture an image for a ground target (GT) and transfer the captured image data back to the base station (BS) immediately. In the following subsections, models of UAV-BS channel, oblique photography, and image transmission are described, respectively.

\subsection{UAV-BS Channel Model}
In this paper, we consider a real-time image acquisition system with a UAV, a BS, and a GT, where a three-dimensional (3D) Cartesian coordinate system is adopted. We assume that the coordinates of the BS and the GT are known a priori to the UAV. Let $\left( {{\bf{w}}_b^T,{z_b}} \right)$ denote the 3D coordinate of the BS, where ${{\bf{w}}_b}\!=\!{\left[ {{x_b},{y_b}} \right]^T}\!\in\!{\mathbb{R}^{2 \times 1}}$ represents the horizontal coordinate and ${z_b}$ represents the vertical coordinate, respectively. Similarly, let $\left( {{\bf{q}}^T,{z}} \right)$ denote the 3D coordinate of the UAV, where ${{\bf{q}}}\!=\!{\left[ {{x},{y}} \right]^T}\!\in\!{\mathbb{R}^{{\rm{2}} \times 1}}$ represents the horizontal coordinate and ${z}$ represents the vertical coordinate, respectively. As such, the distance between the UAV and the BS, denoted by $d_{u,b}$, is given by 
\begin{equation}\label{E1}
{d_{u,b}} = \sqrt {{{\left\| {{{\bf{q}}} - {{\bf{w}}_b}} \right\|}^{\rm{2}}}{\rm{ + (}}{z} - {z_b}{)^2}}.
\end{equation}

For ease of analysis, we assume that the wireless channel between the UAV and BS is dominated by the LoS link.\footnote{Note that this work can be extended to more complex channel models such as probabilistic LoS model or Rician fading model \cite{You2}. However, we may start with the simplest case under the LoS channel, where a trade-off between the image resolution and transmission rate still exists due to the possibly large horizontal distances between the GTs and the BS.} Hence, the channel power gain between the UAV and BS, denoted by $h_{u,b}$, can be modeled based on the free-space path loss model by 
\begin{equation}\label{E2}
{h_{u,b}} = {\beta _0}d_{u,b}^{-2} = \frac{\beta_0}{{{{\left\| {{{\bf{q}}} - {{\bf{w}}_b}} \right\|}^{\rm{2}}}{\rm{ + (}}{z} - {z_b}{)^2}}},
\end{equation}
where ${\beta_0}$ is the average channel power gain at the reference distance of $1$ meter. The achievable rate in bits per second (bps) between the UAV and BS, denoted by $R$, can thus be expressed as 
\begin{align}\label{E3}
{\mathcal{R}} = &\mathcal{B}{\log _2}\left( {1 + \frac{{|{h_{u,b}}{|^2}P}}{{{\sigma ^2}\Gamma }}} \right) \notag \\
= &\mathcal{B}{\log _2}\left( {1 + \frac{\gamma_0}{{({{\left\| {{{\bf{q}}} - {{\bf{w}}_b}} \right\|}^{\rm{2}}}{\rm{ + (}}{z} - {z_b}{)^2})}}} \right),
\end{align}
where $\mathcal{B}$ denotes the channel bandwidth, $\sigma ^2$ denotes the noise power, $P$ denotes the transmit power, and $\Gamma$ denotes the signal-to-noise ratio (SNR) gap between the practical modulation-and-coding scheme and the theoretical Gaussian signaling. For simplicity, we define $\gamma_0 = \frac{P\beta_0}{\sigma^2\Gamma}$ as the received SNR at the reference distance of 1 meter.

\subsection{Oblique Photography Model}
In conventional UAV-assisted vertical photography model, capturing images over the GT can achieve high resolution due to the very close distance to the GT. However, the channel quality between the BS and UAV is poor especially when the GT is far away from the BS, thus leading to a large transmission delay. As such, we present in Fig. \ref{F2} a UAV-assisted 3D oblique photography model where the UAV can capture images of the GT at a certain oblique angle from a distance rather than over it. Compared to the conventional vertical photography model, the oblique photography model can greatly reduce the transmission delay while satisfying the pre-determined resolution requirement.
\begin{figure}[htbp!]
\centering
\includegraphics[width=0.4\textwidth]{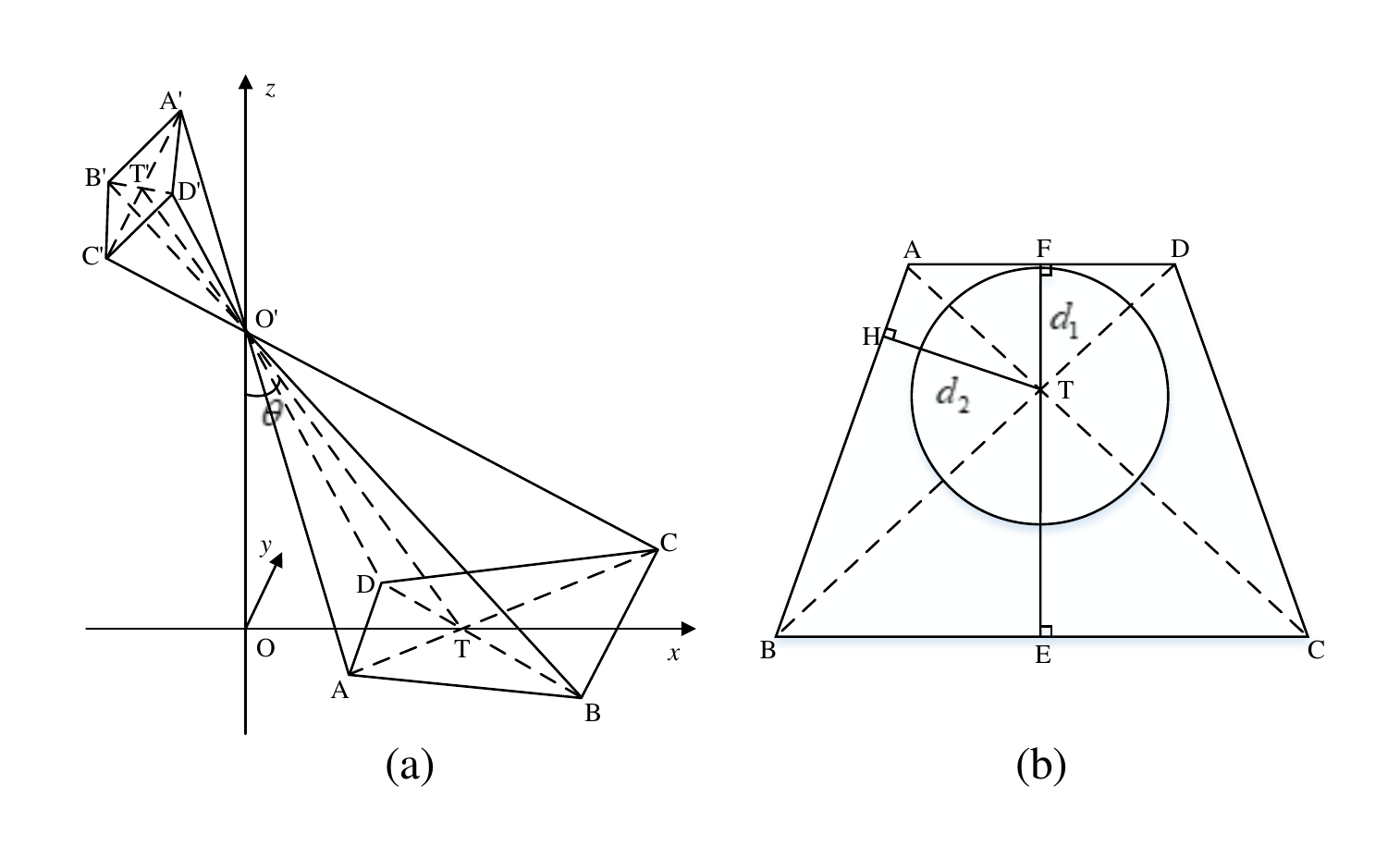}
\caption{(a) UAV-assisted 3D oblique photography model; (b) the location relationship between the GT and camera's coverage on the ground.}
\label{F2}
\end{figure}

In Fig. \ref{F2}(a), the rectangle $\rm{A'B'C'D'}$ represents the image plane of the camera and the isosceles trapezoid $\rm{ABCD}$ represents the camera's coverage region on the ground. $\rm{T'}$ and $\rm{T}$ represent the center of the image plane and GT, respectively. $\rm{O'}$ represents the focal point of the camera and $\rm{O}$ represents the projection of $\rm{O'}$ on the ground. We assume that the camera equipped on the UAV can automatically focus via adjusting its oblique angle to ensure that $\rm{T}$ is imaged at $\rm{T'}$. Let $\left( {{\bf{w}}_g^T,{z_g}} \right)$ denote the coordinate of GT, where ${{\bf{w}}_g}\!=\!{\left[ {{x_g},{y_g}} \right]^T}\!\in\!{\mathbb{R}^{2 \times 1}}$ represents the horizontal coordinate and ${z_g}$ represents the vertical coordinate. For simplicity, we assume that the GT is a circle with a known radius $r_0$ regardless of the influence of terrain fluctuations. Therefore, the height of GT is negligible, i.e., ${z_g}=0$. Denote by ${d_{u,g}}$ the distance from the UAV to GT, i.e., $|\rm{T'T}|$, which is given by
\begin{equation}\label{E4}
{d_{u, g}} = \sqrt {{{\left\| {{{\textbf{q}}} - {{\textbf{w}}_g}} \right\|}^{\rm{2}}}{\rm{ + }}{z}^2}.
\end{equation}

In conventional vertical photography model, the camera's image plane is parallel to the ground and the image resolution can be characterized by the ground sample distance (GSD) that each pixel can represent. However, in oblique photography model, the GSD that each pixel can represent is various. Hence, the image resolution can no longer be represented by GSD in oblique photography model. As such, we redefine the image resolution as the ratio of GT's area (a circle) to camera's coverage area (an isosceles trapezoid). Denote by $S_c$ the camera's coverage area when capturing the image of GT, is modeled as derived in Appendix  \ref{A}
\begin{equation}\label{E5}
\begin{aligned}
S_c = S_v \cdot \phi \left( {{\theta}} \right) = \frac{{{w_0}{l_0}z^2}}{{f_0^2}} \cdot \frac{1}{{{{\left( {1 - \frac{{w_0^2}}{{4f_0^2}}{{\tan }^2}{\theta}} \right)}^2}{{\cos }^3}{\theta}}},
\end{aligned}
\end{equation}
where ${\theta}$ denotes the camera's oblique angle (i.e., $\angle \rm{OO'T}$ in Fig.  \ref{F2}(a)) and we have $\cos {\theta}{\rm{ = }}\frac{{{z}}}{{{d_{u, g}}}}$ and $\tan {\theta} = \frac{{\left\| {{{\textbf{q}}} - {{\textbf{w}}_g}} \right\|}}{{{z}}}$. To guarantee that the UAV can successfully capture the image, ${\theta}$ should satisfy the constraint that $0\le{\theta}<\theta_0$,\footnote{$\theta_0 = \arctan\frac{{2f_0}}{w_0}$ represents the 
angle limit of oblique photography model. Please see Appendix A for more details.}   which is equivalent to
\begin{equation}\label{E6}
b_1{z} -\left\| {{{\textbf{q}}} - {{\bf{w}}_g}} \right\| \geq 0,
\end{equation}
where $f_0$ is the camera's focal length, and $w_0$ and $l_0$ represent the width and length of the image plane, respectively. To be specific, $S_v$ is equivalent to camera's coverage area when capturing images over the GT, which is only determined by $z$. $\phi \left( {{\theta}} \right)$ represents the coverage scaling factor which is monotonically increasing with respect to ${\theta}$. It can be indicated from (\ref{E5}) that when ${\theta}$ tends to $0$, the oblique photography model reduces to the conventional vertical photography model.

Accordingly, the redefined image resolution can be denoted by the UAV's 3D coordinate as follows.
\begin{equation}\label{E7}
{{\cal I}}{\rm{ = }}\frac{{{a{\left( {z^2 - \frac{{1}}{{b_1^2}}{{\left\| {{{\textbf{q}}} - {{\textbf{w}}_g}} \right\|}^{\rm{2}}}} \right)}^2}}}{{{{\left( {{{\left\| {{{\textbf{q}}} - {{\textbf{w}}_g}} \right\|}^{\rm{2}}}{\rm{ + }}z^2} \right)}^{\frac{{\rm{3}}}{{\rm{2}}}}}z^{\rm{3}}}},
\end{equation}
where $a = \frac{{b_1b_2\pi r_0^2}}{{4}}$ with $b_1=\frac{2f_0}{w_0}$ and $b_2=\frac{2f_0}{l_0}$ being constants related to the camera setting. 


As illustrated in Fig. \ref{F2}(b), the GT should be completely included in the camera's coverage region, thus leading to the following constraint
\begin{equation}\label{E8}
{r_0} \le \min \left( {d_1, d_2} \right),
\end{equation}
where $ d_1$ and $d_2$ represent the distance from point $\rm{T}$ to $\rm{AD}$ (i.e., $|\rm{FT}|$) and $\rm{BC}$ (i.e., $|\rm{HT}|$), respectively, which are given in the form of (see Appendix \ref{B} for details)
\begin{equation}\label{E9}
d_1{\rm{ = }}\frac{{z^2 + {{\left\| {{{\textbf{q}}} - {{\textbf{w}}_g}} \right\|}^{\rm{2}}}}}{{b_1{z} + \left\| {{{\textbf{q}}} - {{\textbf{w}}_g}} \right\|}}
\end{equation}
\begin{equation}\label{E10}
d_2{\rm{ = }}\frac{{z^2 + {{\left\| {{{\textbf{q}}} - {{\textbf{w}}_g}} \right\|}^{\rm{2}}}}}{{{{\left( {b_2^2z^2 + (1 + b_2^2){{\left\| {{{\textbf{q}}} - {{\textbf{w}}_g}} \right\|}^2}} \right)}^{\frac{1}{2}}}}}
\end{equation}

\subsection{Image Transmission Model}
Let $\bar {\mathcal{F}}$ denote the size of the uncompressed captured image. Specifically, $\bar {\mathcal{F}} = \frac{{w_0l_02^n}}{{{\delta_0^2}}}$ where $\delta_0$ represents the length of each pixel and $n$ represents the bit depth of the image.{\footnote{The bit depth means the number of bits used to hold a pixel.}} Thus, $\bar {\mathcal{F}}$ is a constant related to the camera settings. In this paper, we assume that only the image data containing the GT will be transmitted back to the BS.{\footnote{Image data containing the GTs can be easily differentiated via some specific separation models such as low-rank decomposition model, Gaussian mixture model as well as deep learning based model \cite{Tang2}.}}  Therefore, the transmission time of the captured image, denoted by $\mathcal{T}$, can be represented as follows 
\begin{equation}\label{E11}
\mathcal{T}  = \frac{\alpha{\bar {\mathcal{F}}\mathcal{Q}}}{{\mathcal{R}}},
\end{equation}
where $0 \le \alpha \le 1$ denotes the compression ratio,{\footnote{Specific image encoders, e.g., JPEG, can remove redundancy by leveraging the inter-pixel correlations \cite{He1}. Moreover, the smaller $\alpha$ is, the less data needs to be transmitted. However, too small compression ratio will lead to the degradation of image quality. JPEG is able to achieve a compression ratio up to 0.1 without visible loss in image quality.}} and ${\alpha{\bar {\mathcal{F}}\mathcal{Q}}}$ denotes the amount of the transmitted image data in bits. From (\ref{E11}), one can see that $\mathcal{T}$ is determined by both $\mathcal{Q}$ and $\mathcal{R}$. Specifically, both large $\mathcal{Q}$ and small $\mathcal{R}$ will result in a large transmission delay. 

\section{Optimization Problem}
In this section, we firstly state the original optimization problem which is a non-convex problem. Then, we further reformulate the original problem into a more tractable form. 

\subsection{Problem Statement}
In this paper, we aim to find an optimal shooting point for the UAV in order to minimize the data transmission time while ensuring that the UAV can capture the image successfully as well as satisfying the predetermined resolution requirement. As such, the optimization problem can be formulated as follows
\begin{subequations}
\begin{align}
(\textbf{P1})~~\mathop {\min }\limits_{ {{\bf{q}}},{z} }\;\;&\mathcal{T} \notag \\
\rm{s.t.}~~& \mathcal{I} \ge {\mathcal{I}_{\min }}, \label{E12a} \\
&b_1{z} - \left\| {{{\bf{q}}} - {{\bf{w}}_g}} \right\| \geq 0, \label{E12b}\\
&r_0\le \min(d_1, d_2), \label{E12c}
\end{align}
\end{subequations}
where ${\mathcal{I_{\min}}}$ represents the minimum resolution requirement. (\ref{E12a}) indicates that the resolution of the captured image should satisfy the minimum requirement. (\ref{E12b}) shows the condition that the UAV can capture images effectively. (\ref{E12c}) guarantees that the GT can be completely included in the camera's coverage region. Since $\{ {{\bf{q}}},{z}\}$ is a set of coupling variables and (\ref{E12a}) and (\ref{E12c}) as well as the objective function are non-convex, (P1) is an intractable non-convex optimization problem.

\subsection{Problem Reformulation}
In this part, we will propose a new objective function equivalent to that in (P1). To start with, we will first analyze the influence of UAV's horizontal coordinate on the transmission time with fixed UAV's vertical coordinate. Before analyzing this problem, we introduce the following two properties.
\newtheorem{property}{\emph{\underline{Property}}}
\begin{property}
\label{p1}
$f(x) = \frac{{{m_1}{{\left( {{m_2} - {x^2}} \right)}^2}}}{{{{\left( {{x^2}{\rm{ + }}{m_0}} \right)}^{\frac{{\rm{3}}}{{\rm{2}}}}}}},{m_0} > 0,{m_1} > 0,{m_2} > 0$ is monotonically decreasing in the feasible domain of $0 \le x < \sqrt {{m_2}}$.
\end{property}
\begin{proof}
See \emph{Appendix C}.
\end{proof}

\begin{property}
\label{p2}
$\varphi (x) = {\log _2}\left( {1 + \frac{{{n_1}}}{{{x^2} + {n_0}}}} \right),{n_0} > 0,{n_1} > 0$ is monotonically decreasing in the feasible domain of $x\ge 0$.
\end{property}

According to Property \ref{p1}, with given $z$, we can conclude that $\mathcal{Q}$ decreases with the increase of $\left\| {{{\bf{q}}}\!-\!{{\bf{w}}_g}} \right\|$ in the feasible domain of $0 \le \left\| {{{\bf{q}}}\!-\!{{\bf{w}}_g}} \right\| \le b_1z$.
We can also conclude from Property \ref{p2} that $\mathcal{R}$ decreases with the increase of $\left\| {{{\bf{q}}}\!-\!{{\bf{w}}_b}} \right\|$. Combining Property \ref{p1} with Property \ref{p2}, we have the following lemma stand.
\newtheorem{lemma}{\emph{\underline{Lemma}}}
\begin{lemma}
\label{l1}
The optimal solution to problem (P1) must satisfy ${{\bf{q}}} = \eta {{\bf{w}}_b} + (1-\eta){{\bf{w}}_g}$, where $\eta  \in {[0,1]}$ represents the horizontal coordinate indicator.
\end{lemma}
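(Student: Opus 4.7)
The plan is to reduce the two-dimensional horizontal placement problem to a one-dimensional one along the line through $\mathbf{w}_g$ and $\mathbf{w}_b$ via a geometric projection. The structural key is that every term in (P1) depending on $\mathbf{q}$ does so only through $u := \|\mathbf{q} - \mathbf{w}_g\|$ (the resolution $\mathcal{I}$, the data load $\mathcal{Q}$, and the feasibility quantities in (\ref{E12b}) and (\ref{E12c})) or through $v := \|\mathbf{q} - \mathbf{w}_b\|$ (the link rate $\mathcal{R}$).

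First I would fix an optimum $(\mathbf{q}^*, z^*)$ and set $u^* := \|\mathbf{q}^* - \mathbf{w}_g\|$. On the horizontal circle $C := \{\tilde{\mathbf{q}} : \|\tilde{\mathbf{q}} - \mathbf{w}_g\| = u^*\}$ paired with the same altitude $z^*$, every point remains feasible for (\ref{E12a})--(\ref{E12c}) and shares the same value of $\mathcal{Q}$, so minimizing $\mathcal{T} = \alpha \bar{\mathcal{F}} \mathcal{Q}/\mathcal{R}$ on $C$ is equivalent to maximizing $\mathcal{R}$. Applying Property~\ref{p2} with $n_0 = (z^* - z_b)^2$, $n_1 = \gamma_0$, and independent variable $\|\tilde{\mathbf{q}} - \mathbf{w}_b\|$ shows that $\mathcal{R}$ is strictly decreasing in $\|\tilde{\mathbf{q}} - \mathbf{w}_b\|$, so the minimizer on $C$ is the unique point closest to $\mathbf{w}_b$. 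By elementary planar geometry this minimizer is $\mathbf{w}_g + (u^*/L)(\mathbf{w}_b - \mathbf{w}_g)$ with $L := \|\mathbf{w}_b - \mathbf{w}_g\|$, which rewrites as $\mathbf{q}^* = \eta\,\mathbf{w}_b + (1 - \eta)\,\mathbf{w}_g$ with $\eta = u^*/L \ge 0$.

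The main obstacle is the upper bound $\eta \le 1$, equivalently $u^* \le L$. I would argue this by contradiction: suppose $u^* > L$, so that $\mathbf{q}^*$ sits strictly past the BS on the ray from $\mathbf{w}_g$. The natural comparison is with the segment endpoint $\mathbf{q}' = \mathbf{w}_b$, for which feasibility of (\ref{E12a}) and (\ref{E12b}) follows from Property~\ref{p1} and from $u' = L < u^* \le b_1 z^*$, while feasibility of (\ref{E12c}) requires a direct case check on the non-monotone quantities $d_1, d_2$ in (\ref{E9})--(\ref{E10}). The delicate part is that sliding from $\mathbf{q}^*$ toward $\mathbf{w}_b$ decreases both $u$ and $v$ simultaneously, so $\mathcal{Q}$ rises while $\mathcal{R}$ also rises; the sign of the net change in $\mathcal{T}$ is not forced by the monotonicities alone and must be extracted from a direct inspection of the one-variable quotient $u \mapsto \mathcal{Q}(u, z^*)/\mathcal{R}(|L - u|, z^*)$ using the explicit forms in (\ref{E3}) and (\ref{E7}). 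This ratio comparison is the genuinely non-trivial step; the rest is essentially the circle-projection reduction described above.
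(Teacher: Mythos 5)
Your first step --- fixing the altitude and the distance $u=\|\mathbf{q}-\mathbf{w}_g\|$, observing that $\mathcal{I}$ and $\mathcal{Q}$ are constant on the circle $\{\tilde{\mathbf{q}}:\|\tilde{\mathbf{q}}-\mathbf{w}_g\|=u^*\}$, and invoking Property~\ref{p2} to pick the point of that circle closest to $\mathbf{w}_b$ --- is exactly the paper's proof (the paper phrases the ``closest point'' step as the triangle inequality in $\triangle\mathrm{MQ'N}$ rather than as an explicit projection formula, but it is the same argument). So on the part of the lemma that the paper actually proves, you and the paper coincide.

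The obstacle you flag, namely ruling out $\eta>1$ (equivalently $u^*>d_{g,b}$), is a genuine issue, but you should know that the paper does not resolve it either: its appendix simply declares that the comparison point $\mathrm{Q}$ lies \emph{on the segment} $\mathrm{MN}$ with $\|\mathbf{q}^{\mathrm{Q}}-\mathbf{w}_b\|+\|\mathbf{q}^{\mathrm{Q}}-\mathbf{w}_g\|=\|\mathbf{w}_b-\mathbf{w}_g\|$, which presupposes $u^*\le d_{g,b}$ and silently excludes the case where the circle only meets the extension of the ray beyond the BS. Your instinct that monotonicity alone cannot close this case is correct, and in fact the difficulty is worse than a missing calculation: at $u=d_{g,b}$ the rate loss from moving past the BS is second order in $v=u-d_{g,b}$ (since $\mathcal{R}$ depends on $v$ only through $v^2$), while the reduction in $\mathcal{Q}$ is first order, so $\mathcal{T}$ is strictly \emph{decreasing} as one crosses the BS whenever constraints (\ref{E12a})--(\ref{E12c}) still admit points with $u>d_{g,b}$. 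Hence the bound $\eta\le 1$ is not a consequence of the objective at all; it holds only when the resolution and coverage constraints already confine the UAV to $\|\mathbf{q}-\mathbf{w}_g\|\le d_{g,b}$ (as they do for the numerical parameters used in Section~V, where the feasible radius around the GT is well below $d_{g,b}=250$\,m). In short: your proposal reproduces the paper's argument where one exists, and correctly identifies --- but, like the paper, does not close --- the $\eta\le 1$ step; a complete proof would have to derive $u^*\le d_{g,b}$ from the constraints rather than from the structure of $\mathcal{T}$.
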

\begin{proof}
See \emph{Appendix D}.
\end{proof}

According to Lemma \ref{l1}, we can conclude that the optimal shooting point should be located at the plane through the GT and BS, where the image resolution $\mathcal{Q}$ decreases with the increase of the transmission rate $\mathcal{R}$ with fixed $z$. Therefore, the minimum transmission time $\mathcal{T}$ can be obtained when the transmission rate is maximum. Therefore, (P1) can be reformulated as follows
\begin{align}
(\textbf{P2})\mathop{\max}\limits_{ {{\bf{q}}},{z}}~~&\mathcal{R} \notag \\
\rm{s.t.}~~~&(\ref{E12a})-(\ref{E12c}). \notag
\end{align}

According to Lemma \ref{l1}, the horizontal distance between the UAV and the BS (i.e., $\left\|{{{\bf{q}}}\!-\!{{\bf{w}}_b}}\right\|$) as well as the GT (i.e., $\left\|{{{\bf{q}}}\!-\!{{\bf{w}}_g}}\right\|$) can be represented as follows
\begin{align}
\left\|{{{\bf{q}}}\!-\!{{\bf{w}}_b}}\right\| = (1-\eta)d_{g, b}, \text{and} \left\|{{{\bf{q}}}\!-\!{{\bf{w}}_g}}\right\| =  {\eta}d_{g, b}, \label{E14}
\end{align}
where $d_{g, b} = \left\|{{{\bf{w}}_g}\!-\!{{\bf{w}}_b}}\right\|$ denotes the horizontal distance between the BS and the GT which is known a priori. As such, ${\mathcal{R}}$ can be re-represented as 
\begin{align}\label{E15}
\mathcal{R} = {\log _2}\left( {1 + \frac{\gamma_0}{{(1-\eta)^2d_{g, b}^2{\rm{ + (}}{z} - {z_b}{)^2} }}} \right). 
\end{align}
Therefore, maximizing $\mathcal{R}$ in (P2) is equivalent to minimizing $(1-\eta)^2d_{g, b}^2{\rm{ + (}}{z} - {z_b}{)^2}$. 

In the following, we will convert (\ref{E12a})-(\ref{E12c}) into more tractable forms by leveraging (\ref{E15}). Firstly, by taking logarithm for both sides, (\ref{E12a}) can be re-represented as follows 
\begin{align}\label{E16}
2\ln{{( {z^2 - {\eta}^2\frac{1}{b_1^2}d_{g,b}^2})}}-\frac{3}{2}\!\ln{( {z^2\!+\!{\eta}^2d_{g,b}^2})}-\!3\ln{z} \ge \ln\frac{\mathcal{I}_{\min }}{a}.
\end{align}

Secondly, (\ref{E12b}) can be simplified into the following form 
\begin{align}\label{E17}
&b_1{z} - {\eta}d_{g, b}\geq 0.
\end{align}

Finally, (\ref{E12c}) can be rewritten as
\begin{align}\label{E18}
{{ {z^2\!+\!{\eta}^2d_{g,b}^2}}}\geq r_0\!\max\!\left({{{b_1z\!+\! \eta d_{g,b}}}},{{({b_2^2z^2\!+\!(1\!+\!b_2^2){\eta}^2d_{g,b}^2})^{\frac{1}{2}}}}\right), 
\end{align}
of which the right-hand side (RHS) is convex now due to the convexity of the maximization function.

As such, (P2) can be reformulated as follows.
\begin{subequations}
\begin{align}
(\textbf{P3})\mathop{\min}\limits_{ {\eta},{z}}~~&(1-\eta)^2d_{g,b}^2{\rm{ + (}}{z}\!-\!{z_b}{)^2} \notag \\
\rm{s.t.}~~~& 0\le \eta \le 1,\label{E19a}\\
&(\ref{E16})-(\ref{E18}), \notag
\end{align}
\end{subequations}
where the objective function is convex now. Comparing to the original optimization (P1), the variables reduce from three dimensions to two dimensions. However, (\ref{E16}) and (\ref{E18}) are still non-convex. In the following section, BCD and SCA techniques will be applied to obtain the sub-optimal solution to (P3).

\vspace{0.2in}
\section{Proposed Algorithm}
In this section, we will propose an efficient iterative algorithm to obtain a high-quality suboptimal solution to (P3) by applying BCD and SCA techniques \cite{Liu}. Specifically, (P3) is tackled by iteratively solving two subproblems to optimize the horizontal coordinate indicator and the vertical coordinate with the other one being fixed, until the algorithm converges to a given threshold. We also provide detailed analysis on the complexity of the proposed algorithm.
\subsection{Horizontal Coordinate Indicator Optimization with Fixed Vertical Coordinate}
Given the UAV's vertical coordinate (i.e., $z$), we first consider the following sub-optimization problem of (P3) to optimize the UAV's horizontal coordinate indicator (i.e., $\eta$). 
\begin{align}
(\textbf{P3.1})\mathop{\max}\limits_{ {\eta}}~~&\eta \notag \\
\rm{s.t.}~~&(\ref{E16})-(\ref{E18}), (\ref{E19a}). \notag
\end{align}

Problem (P3.1) is still a non-convex problem due to the non-convexity of the constraints (\ref{E16}) and (\ref{E18}). Specifically, given $z$, the third term (i.e., $-\frac{3}{2}\!\ln{( {z^2\!+\!{\eta}^2d_{g,b}^2})}$) in the left-hand side (LHS) of (\ref{E16}) is convex with respect to (w.r.t) $\eta^2$. Then, we can derive its convex approximation by applying SCA technique as follows
\begin{align}\label{E20}
-\frac{3}{2}\!\ln{( {z^2\!+\!{\eta}^2d_{g,b}^2})} \ge -\frac{3}{2}\!\ln{\left( {z^2\!+\!{\widehat \eta}^2d_{g,b}^2}\right)} \notag \\
-\frac{3d_{g,b}^2}{2{\left( {{z}^2\!+\!{\widehat \eta}^2d_{g,b}^2}\right)}}\left({\eta}^2-{\widehat \eta}^2\right),
\end{align}
where ${\widehat \eta}$ is a local point of $\eta$. According to (\ref{E20}), the constraint (\ref{E16}) can be rewritten as
\begin{align}\label{E21}
&2\ln{{( {z^2 - \frac{1}{b_1^2}{\eta}^2d_{g,b}^2})}}-\frac{3}{2}\!\ln{\left( {z^2\!+\!{\widehat \eta}^2d_{g,b}^2}\right)} \notag \\
&-\frac{3d_{g,b}^2}{2{\left( {{z}^2\!+\!{\widehat \eta}^2d_{g,b}^2}\right)}}\left({\eta}^2-{\widehat \eta}^2\right)-\!3\ln{z} \ge \ln\frac{\mathcal{I}_{\min }}{a},
\end{align}
which is a convex constraint with given $z$ now.

To solve the non-convexity of (\ref{E18}), we can derive the convex approximation of the second term of the LHS by leverage the SCA technique as follows. 
\begin{align}\label{E22}
{\eta}^2d_{g,b}^2 \ge {\widehat \eta}^2d_{g,b}^2 + 2{\widehat \eta}({\eta}-{\widehat \eta})d_{g,b}^2.
\end{align}

As a result, with given $z$, the constraint (\ref{E18}) can be rewritten into a convex one as follows
\begin{align}\label{E23}
&{{ {z^2\!+\!{\widehat \eta}^2d_{g,b}^2 + 2{\widehat \eta}({\eta}-{\widehat \eta})d_{g,b}^2}}}\ge \notag \\
&r_0\!\max\!\left({{{b_1z\!+\! \eta d_{g,b}}}},{{({b_2^2z^2\!+\!(1\!+\!b_2^2){\eta}^2d_{g,b}^2})^{\frac{1}{2}}}}\right), 
\end{align} 

Then, problem (P3.1) can be reformulated as follows
\begin{align} 
(\textbf{P3.1.1})\mathop{\max}\limits_{ {\eta}}~&\eta \notag \\
\rm{s.t.}~~&(\ref{E17}), (\ref{E19a}), (\ref{E21}), (\ref{E23}). \notag
\end{align}

The constraints (\ref{E21}) and (\ref{E23}) are convex with respect to $\eta$ now. Consequently, the sub-optimization problem (P3.1.1) is a standard convex optimization problem which can be solved through some optimization toolboxes, e.g., CVX.

\subsection{Vertical Coordinate Optimization with Fixed Horizontal Coordinate Indicator}
Next, given the UAV's horizontal location (i.e., $\eta$), we consider the following sub-optimization problem of (P3) to optimize the UAV's vertical location (i.e., $z$).
\begin{align}
(\textbf{P3.2})\mathop{\min}\limits_{{z}}~&{\rm{ (}}{z}\!-\!{z_b}{)^2} \notag \\
\rm{s.t.}~~&(\ref{E16})-(\ref{E18}). \notag
\end{align}

Problem (P3.2) is also a non-convex problem due to the non-convexity of the constraints (\ref{E16}) and (\ref{E18}). Specifically, given $\alpha$, in the LHS of (\ref{E16}), the third term (i.e., $-\frac{3}{2}\!\ln{( {z^2\!+\!{\eta}^2d_{g,b}^2})}$) is convex with respect to $z^2$ and the fourth term (i.e., $-3\ln z$) is convex with respect to $z$. Similarly, we can derive their convex approximations by using SCA technique as follows
\begin{align}\label{E24}
-\frac{3}{2}\!\ln{( {z^2\!+\!{\eta}^2d_{g,b}^2})} \ge -\frac{3}{2}\!\ln{\left( {{\widehat z}^2\!+\!{\eta}^2d_{g,b}^2}\right)} \notag \\
-\frac{3}{2{\left( {{{\widehat z}}^2\!+\!{\eta}^2d_{g,b}^2}\right)}}\left({z}^2-{\widehat z}^2\right),
\end{align}

\begin{align}\label{E25}
-3\ln{z} \ge -3\ln{\widehat z} -\frac{3}{{\widehat z}}(z-{\widehat z}),
\end{align}
where $\widehat z$ is a local point of $z$. Then, the constraint (\ref{E16}) can be rewritten as the following convex form
\begin{align}\label{E26}
&\!2\ln{{( {z^2 - \frac{1}{b_1^2}{\eta}^2d_{g,b}^2})}}-\frac{3}{2}\!\ln{\left( {{\widehat z}^2\!+\!{\eta}^2d_{g,b}^2}\right)}-  \notag \\
&\frac{3}{2{({{{\widehat z}}^2\!+\!{\eta}^2d_{g,b}^2})}}\left({z}^2\!-\!{\widehat z}^2\right)\!-\!3\ln{\widehat z}\!-\!\frac{3}{{\widehat z}}(z\!-\!{\widehat z})\!\ge\!\ln\ln\frac{\mathcal{I}_{\min }}{a}. 
\end{align}

With given $\eta$, the convex approximation of the first term in (\ref{E18}) can be derived by applying the SCA technique as 
\begin{align}\label{E27}
z^2 \ge {\widehat z}^2 + 2{\widehat z}(z- {\widehat z}).
\end{align}

Accordingly, the constraint (\ref{E18}) can be rearranged into the new convex form as follows
\begin{align}\label{E28}
&{{ {{\widehat z}^2 + 2{\widehat z}(z- {\widehat z})\!+\!{\eta}^2d_{g,b}^2}}}\ge \notag \\
&r_0\!\max\!\left({{{b_1z\!+\! \eta d_{g,b}}}},{{({b_2^2z^2\!+\!(1\!+\!b_2^2){\eta}^2d_{g,b}^2})^{\frac{1}{2}}}}\right).
\end{align}

As such, problem (P3.2) can be reformulated as the following sub-optimization problem (P3.2.1)
\begin{align}
(\textbf{P3.2.1})\mathop{\min}\limits_{{z}}~~&{\rm{ (}}{z}\!-\!{z_b}{)^2} \notag \\
\rm{s.t.}~~~&(\ref{E18}, (\ref{E26}), (\ref{E28}). \notag
\end{align}

The constraints (\ref{E26}) and (\ref{E28}) are convex with respect to $z$ now. Therefore, the sub-optimization problem (P3.2.1) is a standard convex optimization problem which can be solved through some optimization toolboxes, e.g., CVX.

\subsection{Complexity and Convergence Analysis}
Based on the results of the above two sub-optimization problems, the overall algorithm for computing the sub-optimal solution to (P3) is summarized in \emph{Algorithm 1}. The complexity of Algorithm 1 is analyzed as follows. In each iteration, the horizontal coordinate indicator (i.e., $\eta$) and the vertical coordinate (i.e., $z$) are iteratively optimized using the convex solver based on the interior-point method, and thus their individual complexity can be represented as $O(\rm{log}(1/\varsigma))$ and $O(\rm{log}(1/\varsigma))$, respectively. Specifically, $\varsigma$ represents the predetermined solution accuracy. Then accounting for the BCD iterations with the complexity in the order of $\rm{log}(1/\varsigma)$, the total computation complexity of Algorithm 1 is $O(\rm{log}^2(1/\varsigma))$.
\begin{algorithm}[htb]
\caption{Iterative optimization for {{$\eta$}} and {{$z$}}.}
\label{alg:Framework}
\begin{algorithmic}[1]
\State Initialize ${\eta^0}$ and ${{{z}}^0}$. Let iteration index $i = 0$
\State {\bf{repeat}}
\State Solve problem (P3.1.1) with given UAV's vertical location ${{{z}}^i}$, and denote the optimal solution to the UAV's horizontal location as ${{{\eta}}^{i+1}}$.
\State Solve problem (P3.2.1) with given UAV's horizontal location ${{{\eta}}^{i+1}}$, and denote the optimal solution to the UAV's vertical location as ${{{z}}^{i+1}}$.
\State Update $i = i+1$
\State {\bf{until}} the computed objective value of problem (P3) converges within a pre-specified precision $\varsigma  > 0$.
\end{algorithmic}
\end{algorithm}

Next, we investigate the convergence property of Algorithm 1. Let $\mu({{{\eta}}^i},{{{z}}^i})$ denote the objective value of problem (P3) in the $i$-th iteration. Therefore, the following inequality holds,
\begin{equation}\label{E29}
\begin{array}{l}
\mu({{{\eta}}^i},{{{z}}^i}) \mathop \le \limits^{(a)} \mu({{{\eta}}^{i+1}},{{{z}}^i})
\!\mathop \le\limits^{(b)}\mu({{{\eta}}^{i+1}},{{{z}}^{i+1}})\mathop  \le \limits^{(c)}\!\mu^*({{{\eta}}^{i+1}},{{{z}}^{i+1}}).
\end{array}
\end{equation}
where $\mu^*({{{\eta}}^{i+1}},{{{z}}^{i+1}})$ represents the optimal solution to (P3). The inequality (a) holds since Step 3 in Algorithm 1 can obtain the optimal solution to (P3.1.1). The inequality (b) holds as Step 4 in Algorithm 1 can obtain the optimal solution to (P3.2.1). Since the SCA technique is used to achieve the lower bounds of the constraints (\ref{E16}) and (\ref{E18}), the optimal solution to (P3) must be the lower bound of the original optimal solution. Consequently, the inequality (c) holds. Therefore, the optimal solutions to (P3.1.1) and (P3.2.1) are guaranteed to be non-decreasing according to (\ref{E29}) over the iterations. Thus Algorithm 1 can converge to a locally optimal solution.

\section{Numerical Results}
In this section, we will provide numerical results to verify the effectiveness of the proposed system. To be specifically, we compare the performance of the proposed system with two other systems including conventional vertical photography scheme and proposed scheme solved by exhaustive search (ES). In the simulations, we assume that the BS and GT are located at $(0, 0, 25)$\;m and $(150, 200, 0)$\;m, respectively. The transmit power is set to $P = 10$\;dBm and average noise power is set to $\sigma_0^2=-109$\;dBm. $\beta_0$ is assumed to be $-40$\;dB and the SNR gap $\Gamma$ is set to $10$\;dB. Specifically, we set the radius of the GT as $r_0 = 20$\;m. According to the camera settings in \cite{Martinus}, $f_0$, $w_0$, $l_0$ as well as $\delta_0$ is set to $0.035$\;m, $0.0156$\;m, $0.0235$\;m, and $3.9\times 10^{-6}$\;m, respectively. The image compression ratio $\alpha$ is set to $0.8$ for simplicity. 

Recall that in this paper, we mainly aim at minimizing the data transmission time while satisfying the predetermined resolution requirement. Therefore, we present in Figs. \ref{F5} and \ref{F6} the performance comparison results of our proposed scheme with other two heuristic schemes, named conventional vertical photography scheme (for brevity, we denote it by conventional scheme) and proposed scheme solved by ES. To be specific, we assume that the UAV should fly to the top of the GT to capture the image with the same resolution in the conventional scheme. For another heuristic scheme, we find the optimal shooting point for the UAV by ES with a step size of $1m$ which thus usually has an extremely higher complexity with $O(n^3)$ than our proposed BCD and SCA-based algorithm. 
\begin{figure}[htbp!]
\centering
\includegraphics[width=0.5\textwidth]{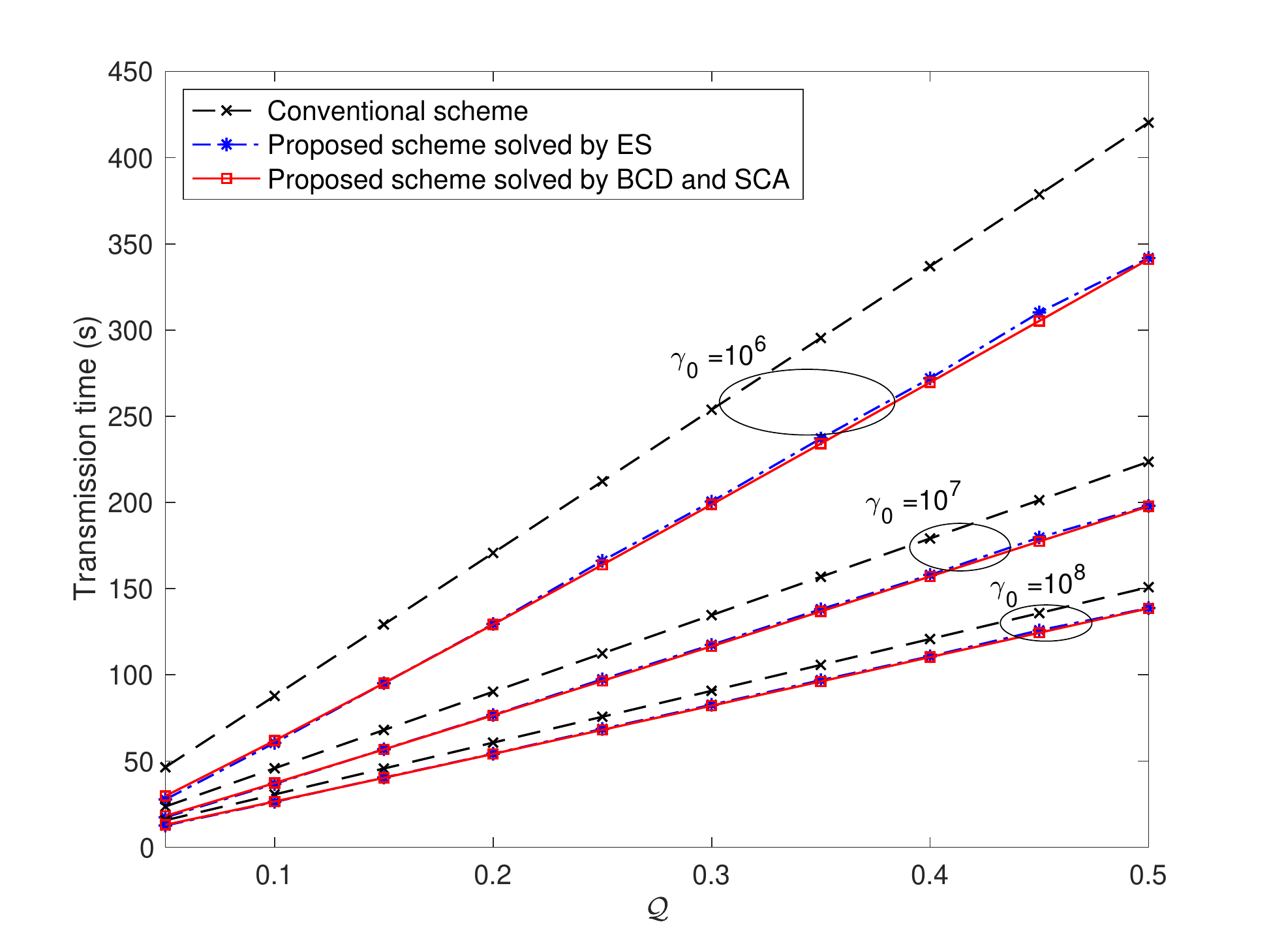}
\caption{Performance comparisons of transmission time versus $\mathcal{Q}$ among different schemes under different received SNR at the reference distance of 1 meter, i.e., $\gamma_0 = \{10^6, 10^7, 10^8\}$.}
\label{F5}
\end{figure}

In Fig. \ref{F5}, we show the performance comparison results of transmission time versus $\mathcal{Q}$ among different schemes under different received SNR at the reference distance of 1 meter, i.e., $\gamma_0 = \{10^6, 10^7, 10^8\}$. Please note that here we simulate different channel conditions via changing $\gamma_0$. From Fig. \ref{F5}, one can see that the proposed scheme solved by BCD and SCA can achieve a comparable performance in each case as compared to the proposed scheme solved by ES, thus verifying the correctness and effectiveness of our proposed model and optimization algorithm. One can also find from Fig. \ref{F5} that the proposed scheme can effectively save a lot of time as compared to the conventional scheme. Moreover, such superiority becomes obvious gradually with the degradation of the channel condition as well as with the increase of resolution requirement.   

\begin{figure}[htbp!]
\centering
\includegraphics[width=0.5\textwidth]{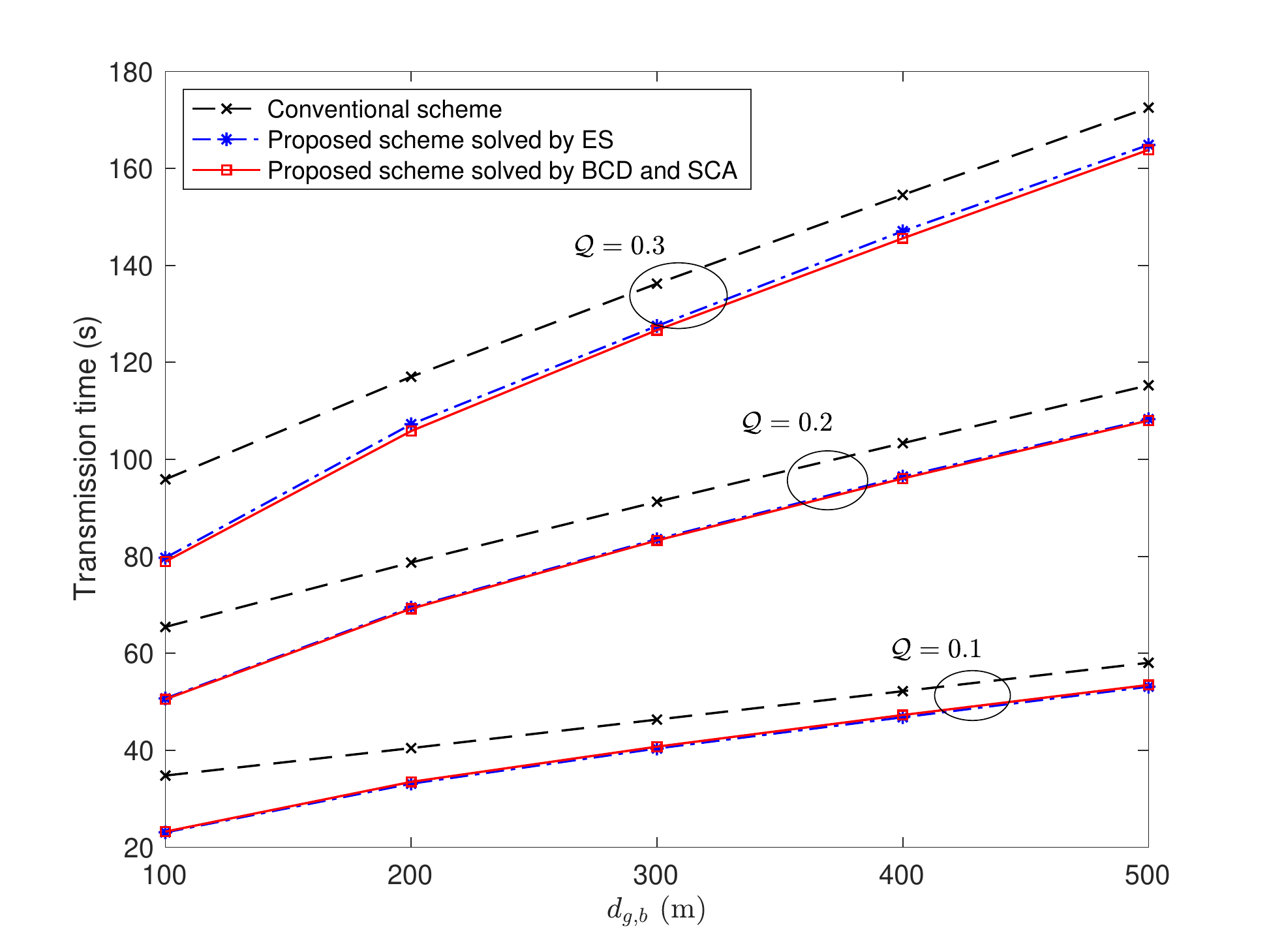}
\caption{Performance comparisons of transmission time versus $d_{g,b}$ among different schemes under different resolution requirements, i.e., $\mathcal{Q} = \{0.1, 0.2, 0.3\}$.}
\label{F6}
\end{figure}
In Fig. \ref{F6}, we present the performance comparison results of transmission time versus the distance between the BS and the GT, i.e., $d_{g,b}$ among different schemes under different resolution requirements, i.e., $\mathcal{Q} = \{0.1, 0.2, 0.3\}$. Again, the similar performance of the proposed scheme solved by BCD and SCA technique or by ES verify the correctness of the proposed algorithm. With the increase of the distance between the BS and the GT, our proposed scheme can save more transmission time as compared to the conventional scheme. In addition, as the resolution requirement becomes more stringent, our proposed scheme can grab more performance gains as compared to the conventional scheme. This is expected because with the increase of $\mathcal{Q}$, the data amount of the captured image becomes much larger which can be referred from (\ref{E11}). However, our proposed scheme can provide a non-trivial tradeoff between the resolution and the transmission delay, thus greatly reducing the transmission time. 

\section{Conclusion}
In this paper, we have proposed a novel oblique photography model, where the image resolution is redefined based on the UAV's 3D coordinate. By leveraging the proposed oblique photography model, the UAV placement problem has been formulated into a non-convex optimization problem, aiming at minimizing the data transmission time while satisfying the predetermined resolution requirement. The original problem is firstly simplified and then suboptimally solved with BCD and SCA techniques. Comprehensive numerical results have been presented to show the effectiveness of the proposed scheme.

\begin{appendices}
\section{}\label{A}
By solving the derivative of $f(x)$, we can obtain
\begin{equation}\label{E44}
f'(x)\!=\!\frac{{-{m_1}x({m_2}\!-\!{x^2}){{({x^2}\!+\!{m_0})}^{\frac{1}{2}}}({x^2}\!+\!4{m_0}\!+\! 3{m_2})}}{{{{({x^2}\!+\!{m_0})}^3}}}.
\end{equation}

Since $0\le x<\sqrt{m_2}$, $m_0>0$, and $m_1>0$, then we have $f'(x) \le 0$. Therefore, one can conclude that $f(x)$ decreases monotonically in the feasible domain of $0 \le x<\sqrt{m_2}$. 

\section{}\label{B}
\begin{figure}[htbp!]
\centering
\includegraphics[width=0.25\textwidth]{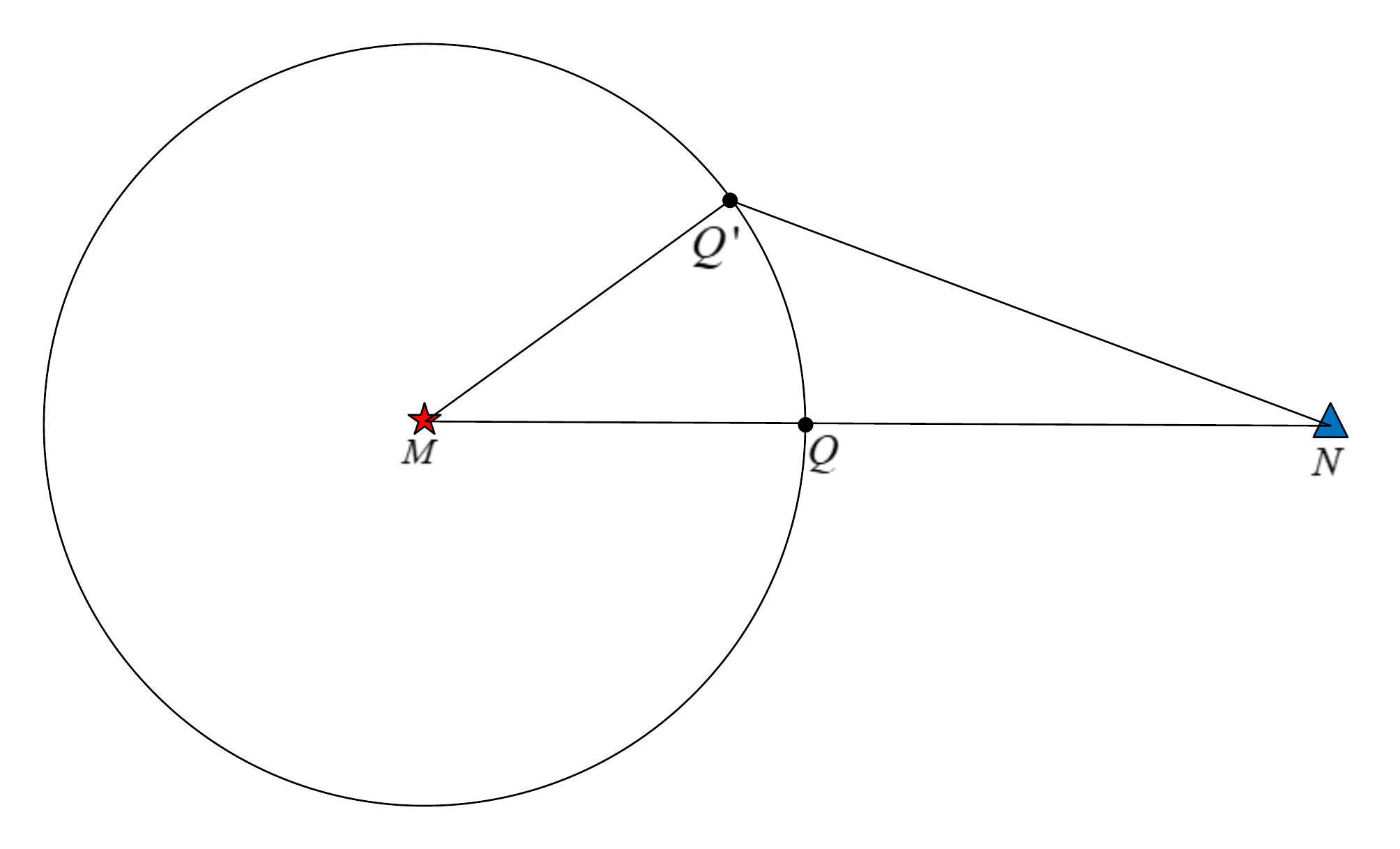}
\caption{Comparisons of transmission rate at different points with given $z$.}
\label{F8}
\end{figure}
For ease of exposition, we present in Fig. \ref{F8} a horizontal diagram to compare the transmission rate at different points. Assume that point ${\rm{M}}$ denotes the GT and point ${\rm{N}}$ denotes the BS, respectively. For given $z$, ${\rm{Q}}$ and ${\rm{Q'}}$ are two different points on the same circle centered at ${\rm{M}}$ and we have $\left\| {{\bf{q}}^{\rm{Q}} - {{\bf{w}}_g}} \right\|\!=\!\left\| {{\bf{q}}^{{\rm{Q'}}} - {{\bf{w}}_g}} \right\|$. To be specific, ${\rm{Q}}$ is located on the line segment of $\rm{MN}$ whose horizontal coordinate satisfies $\left\| {{{\bf{q}}^{\rm{Q}} } - {{\bf{w}}_b}} \right\| + \left\| {{{\bf{q}}^{\rm{Q}} } - {{\bf{w}}_g}} \right\| = \left\| {{{\bf{w}}_b} - {{\bf{w}}_g}} \right\|$. 

According to (\ref{E7}), we can conclude that images captured by the UAV have the same resolution at $\rm{Q}$ and $\rm{Q'}$ due to the same vertical altitude and the same distance from the GT. In $\triangle \rm{MQ'N}$, we have $\left\| {{\bf{q}}^{\rm{Q'}} - {{\bf{w}}_g}} \right\| + \left\| {{\bf{q}}^{\rm{Q'}} - {{\bf{w}}_b}} \right\| > \left\| {{{\bf{w}}_b} - {{\bf{w}}_g}} \right\| = \left\| {{{\bf{q}}^{\rm{Q}}} - {{\bf{w}}_b}} \right\| + \left\| {{{\bf{q}}^{\rm{Q}}} - {{\bf{w}}_g}} \right\|$. Since $\left\| {{\bf{q}}^{\rm{Q}} - {{\bf{w}}_g}} \right\| = \left\| {{\bf{q}}^{{\rm{Q'}}} - {{\bf{w}}_g}} \right\|$, then we have $\left\| {{\bf{q}}^{{\rm{Q'}}} - {{\bf{w}}_b}} \right\| > \left\| {{\bf{q}}^{\rm{Q}} - {{\bf{w}}_b}} \right\|$. Therefore, according to Property 2, we can draw the conclusion that the UAV can always achieve higher transmission rate at point $\rm{Q}$ than that at point $\rm{Q'}$ with given $z$. As such, the transmission time at point $\rm{Q}$ must be smaller than that at point  $\rm{Q'}$. Hence, the optimal solution to problem (P1) can always be found on the line segment of $\rm{MN}$. Equivalently, the horizontal coordinate of the optimal solution must satisfy ${{\bf{q}}} = \eta {{\bf{w}}_b} + (1-\eta){{\bf{w}}_g}$, where ${{\bf{w}}_b}$ represents the horizontal coordinate of $\rm{N}$, ${{\bf{w}}_g}$ represents the horizontal coordinate of $\rm{M}$, and $\eta  \in {[0,1]}$ represents the horizontal coordinate indicator.
\end{appendices}

\vfill

\bibliographystyle{IEEEtran}
\end{document}